\newtheorem{theorem}{Theorem}
\newtheorem{proposition}{Proposition}
\newtheorem{lemma}{Lemma}
\newtheorem{corollary}{Corollary}
\title{Group Survival Probability under Contagion in Microlending}
\author{
    \begin{minipage}[t]{0.3\textwidth}
        \centering
        Héctor Jasso-Fuentes\\
        {\footnotesize
        Department of Mathematics\\ Cinvestav - Quéretaro\\Lib. Norponiente 2000, 76230 Juriquilla, Qro., Mexico \\ORCID 0000-0002-2502-9219\\hector.jasso@cinvestav.mx}
    \end{minipage}
    \hfill
    \begin{minipage}[t]{0.3\textwidth}
        \centering
        Alejandra Quintos\textsuperscript{1}\\
        {\footnotesize
        Department of Statistics\\University of Wisconsin-Madison\\Madison, WI 53706. USA\\ORCID 0000-0003-3447-3255\\alejandra.quintos@wisc.edu}
    \end{minipage}
    \hfill
    \begin{minipage}[t]{0.3\textwidth}
        \centering
        Xinta Yang\textsuperscript{2}\\
        {\footnotesize
        Department of Statistics\\
        University of Wisconsin-Madison\\
        Madison, WI 53706. USA\\
        ORCID 0009-0003-3959-3173\\
        xinta.yang@wisc.edu}
    \end{minipage}
}
\date{May 2025}
\begin{document}

\maketitle

\begin{abstract}
    In the context of microfinance, a group of individuals undertake business projects that may interfere with one another. A "contagious default" happens if one person's project failure leads to the default of another group member. In this paper, we apply a probabilistic approach to analyze the impact of such contagion among investment group members.
    Firstly, a general formula is provided to compute the group survival probability with the presence of contagion effect.
    Then, special cases of this probability model are examined in detail.
    In particular, we show that if the investment group is homogeneous, defined in the paper, then including more members into the group will eventually lead to default with probability $1$.
    This differs from the non-contagious scenario, where the default probability decreases monotonically with respect to the group size.
    Afterwards, we provide an upper bound of the optimal group size under the homogeneous setup; so, one can run a linear search within finite time to locate this optimizer.
\end{abstract}

\footnotetext[1]{Supported by the Office of the Vice Chancellor for Research and Graduate Education at the University of Wisconsin–Madison with funding from the Wisconsin Alumni Research Foundation.}
\footnotetext[2]{Corresponding author}

\section{Introduction}

The idea of microfinance appeared in the late-1970s when Mohammad Yunus began a pilot scheme lending money to villagers who had no access to conventional loans with the hope of reducing poverty and helping borrowers start their own business projects \parencite{yunus2007banker}.
Today, microfinance institutions are widespread in less developed regions where poor people face challenges borrowing money from banks due to a lack of credit history or collateral to back up their loans.
Despite the great success of micro-lending seen by the Grameen Bank founded by Yunus, many similar programs face challenges in collecting repayments from participants \parencite{adams1986rural}.
To increase the success rate of such financing programs, various means are implemented in practice.
First of all, instead of providing financial assistance to individuals, a group of people is often preferred. \textcite{conlin1999peer} presents a unique perspective explaining the success of peer-group micro-lending programs in north America, and \textcite{attanasio2015impacts} conducted randomized experiments in Mongolia showcasing the positive impact of group lending compared to individual lending. 
In addition, the concept of "joint liability" is often applied within the group, i.e. if an individual's business project fails, as long as others can help cover his or her repayment duty, the whole team is still considered safe and open to future re-financing opportunities.
The impact of this joint liability on repayment rates is analyzed by \textcite{huppi1990role} and \textcite{besley1995group}.
Also, one may intuitively agree that unless the individual risks are highly correlated, the overall risk of everybody, or the majority of the group, defaulting simultaneously may be low.
Secondly, the borrower group is usually self-selected in an effort to suppress free-riding, as a close social ties may enhance peer pressure.
Whether or not the presence of such a social tie is effective in improving repayment rates remains debatable \parencite{besley1995group, wydick1999can}.
Thirdly, different penalty schemes in the event of default are incorporated. 
One of the stricter rules is known as contingent renewal, which limits or denies future loans to all group members if any one of them defaults \parencite{chowdhury2007group}.
For more discussion on the topic of defaulting penalties, refer to the works by \textcite{diener2009mathematical} and \textcite{diener2016randomness}.

Besides the many existing literature that study the different lending mechanisms hoping to increase the repayment rate as mentioned in the previous paragraph, the problem of optimizing the group size has been explored by economists such as \textcite{ahlin2015role}, \textcite{rezaei2017optimal}, \textcite{abbink2006group}; and mathematicians such as \textcite{protter2022quintos}.
We include in our model the concept of group contagion, where a member who defaults purely due to his or her own project setbacks would generate a negative effect, which may bring other people down with a certain probability.
The idea of contagion was also studied in the works of \textcite{davis2001infectious} and \textcite{erol2023contagion}. Our paper differs from these studies in the sense that we use contagion to analyze the optimal group size, and our computation is based on a probabilistic approach instead of conducting social experiments or performing a graph-theoretical analysis.

The outline of this paper is as follows. 
In \cref{section:3}, we present the group model and our definition of group default.
The main body is in \cref{section:4}, where we compute the group survival probability in terms of group size and the contagion factor.
Afterwards in its subsection, a special case is analyzed in detail, where we prove that unless there is absolutely no contagion, simply including more participants will eventually lead to the group default.
Although the exact optimal group size cannot be obtained analytically, we provide, towards the end of this section, an explicit upper bound for where this optimal value lies, hence a numerical approach is readily available.

\section{Notation and Convention} \label{section:2}

This section presents the notations and conventions used throughout the paper for ease of reference.

\begin{itemize}
    \item Probability related notations
    \begin{itemize}
        \item We fix a complete probability space $(\Omega, \mathscr{F}, \mathbb{P})$.
        
        \item Given an index set $I$, $\eta_{i\in I}$ refers to the random vector $(\eta_i)_{i\in I}$, and $\eta_{i\in I} \le T$ means $\eta_i \le T$ for all $i\in I$.
        
        \item For a given random variable $X$,  the notation $X\in\mathscr{F}$ means $X$ is $\mathscr{F}-$ measurable.
    \end{itemize}

    \item Computational conventions
    \begin{itemize}
        \item Products over an empty set are defined as $0$, i.e. $\prod_{i\in I} a_i = 1$, when $I = \emptyset$.

        \item $0^0 = 1$

        \item $\mathcal{I}^n_k$ denotes the collection of all index subsets of size $k$ out of $n$ elements. For example, if $n = 3$ and $k = 2$, then $\mathcal{I}^3_2 = \{\{1,2\}, \{1, 3\}, \{2, 3\}\}$.
    \end{itemize}
\end{itemize}

\section{Model Setup} \label{section:3}

Suppose we have $n + 1$ individuals forming a group that plans to borrow money for their business projects, where $n$ of them are regular group members, and there is a group leader.
Define the default time of the i-th member due solely to their own investment failure (i.e., independent of others) as $\eta_i$, which is a random variable measurable with respect to $\mathscr{F}$. We index individuals by $i = 0, ..., n$, where $i = 0$ refers to the group leader.

We now introduce the concept of a contagion factor among \textit{regular} group members. 
If member $i$ defaults due to their own investment failure, they might negatively affect other members in the group. Let $\{Y_{ij} \in \mathscr{F}\}_{i, j = 1, ..., n; i\ne j}$ be binary-valued random variables indicating whether member $i$ causes member $j$ to default. 

In this framework, $\eta_i$ represents the time at which member $i$ would default in the absence of contagion, while $Y_{ij}$ captures the peer effect of contagion among regular group members. These two components are modeled separately to distinguish between individual risk and group-driven influence. To ensure analytical tractability under our framework presented later, we impose the following assumption. 

\begin{enumerate}[label=(A\arabic*)]
    \item Group member $i$ may influence others via contagion only if their own default is caused by their own investment failure, not due to contagion from others.
\end{enumerate}
If we do not assume (A1), then there will be chains of contagion among the regular group members, and because group members form a network, our analysis below will quickly become intractable. 
Also, we assume that if the group leader (indexed by $i=0$) defaults, then the entire group is considered to have defaulted. Thus, we only work with the contagion factor among the $n$ regular group members.

Under this framework, the group is considered to have defaulted if and only if one of the following 2 events happens: (i) the group leader defaults, or (ii) all regular members default, regardless of the leader's status. 

Our goal is to compute the survival probability of the group 
\begin{equation} \label{equation:group-default-def}
    \mathbb{P}(\text{the group still survives at time }T)
\end{equation}
where $T$ represents the end of the financing period.

\subsection{Contagion Factor}
The motivation for introducing the idea of contagion is that if one person defaults, their failure might negatively influence others. The influence should naturally depend on time: given that member $i$ defaults at time $\eta_i$, the probability that they cause member $j$ to default before the terminal time $T$ should be a function of $\eta_i$. Intuitively, if $\eta_i$ is small, there is more time for the negative effect to impact $j$, thus the corresponding probability should be bigger.
Formally, we represent the contagion effect by binary random variables $Y_{ij}$, $i, j = 1, ..., n$ and $i \ne j$, satisfying the following conditional distribution, 
\begin{equation} \label{equation:contagion-1}
    \begin{aligned}
        \mathbb{P}(Y_{ij} = 1 \,|\, \eta_1, ..., \eta_n) &:= q_{ij}(\eta_i) \\
        \mathbb{P}(Y_{ij} = 0 \,|\, \eta_1, ..., \eta_n) &:= 1 - q_{ij}(\eta_i)
    \end{aligned}
\end{equation}
where $q_{ij}(t)$ is a non-increasing function of time. Since these are probabilities, we also require $q_{ij}(t) \ge 0$, and we may assume $q_{ij}(t) = 0$ for $t > T$, as contagion is irrelevant after the investment period ends.
When $Y_{ij} = 1$, it means member $i$ causes member $j$ to default before the terminal time $T$.

A remark is in order.
The definition above emphasizes that the distribution of $Y_{ij}$ depends \textit{only} on the default time $\eta_i$ of member $i$, not on other members' default times.
Formally, this implies
\begin{align*}
    \mathbb{P}(Y_{ij} = 1 | \eta_i) &= \mathbb{E}[\mathbb{P}(Y_{ij} = 1 | \eta_1, ..., \eta_n) | \eta_i] \\
    &= \mathbb{E}[q_{ij}(\eta_i) | \eta_i] \\
    &= q_{ij}(\eta_i) = \mathbb{P}(Y_{ij} = 1 | \eta_1, ..., \eta_n)
\end{align*}

This setup implies that, given all members' natural default times, whether member $i$ brings down member $j$ through contagion depends solely on $\eta_i$.

In addition, we make two further assumptions.
\begin{enumerate}[label=({A}\arabic*)]
    \setcounter{enumi}{1}
    \item The $\eta_i \text{'s are mutually independent}$
    \item The $Y_{ij} \text{'s are conditionally independent of each other given } (\eta_1, ..., \eta_n)$
\end{enumerate}
Assumption (A2) reflects the idea that each individual's investment is independent of the others. Assumption (A3) requires further justification.
First, $Y_{ij}$ models interactions between members $i$ and $j$, while $Y_{ik}$ concerns $i$ and $k$; so they do not necessarily affect one another.
Also, symmetry between $Y_{ij}$ and $Y_{ji}$ is not required, since each arises from a different member's project.
Second, the variables $\{Y_{ij}\}$ and $\{\eta_i\}$ model different aspects of the system (group interactions versus individual risk) so we assume their dependence structure is separate.

At first glance, these independence assumptions might suggest the group behaves like a collection of independent members.
However, the inclusion of contagion variables $\{Y_{ij}\}$ introduces meaningful dependencies.
For instance, consider the probability that member $i$ defaults before time $T$:
\begin{align*}
    &\mathbb{P}(i\text{ defaults before } T) = \mathbb{P}(\eta_i \le T) \\
    &\hspace{7em}+ \sum_{J\subset \{1, ..., n\}}\mathbb{P}(\eta_i > T, \eta_{j\in J} \le T, Y_{ji} = 1 \text{ for at least one } j\in J)
\end{align*}
This depends on all $\{\eta_j\}_{j=1}^n$, not just $\eta_i$; so the individual default probabilities are not independent.
This confirms that the model captures more than just a collection of non-interacting individuals.

Now, we provide two plots of $q_{ij}(t)$ below, which illustrate two special cases: 
\begin{itemize}
    \item "immediate constant contagion"
    \item "constant contagion up to time $t=T-\Delta$ and no contagion afterward"
\end{itemize}
\begin{figure}[h!]
    \centering
    \begin{subfigure}[b]{0.45\linewidth}
        \includegraphics[width=\linewidth]{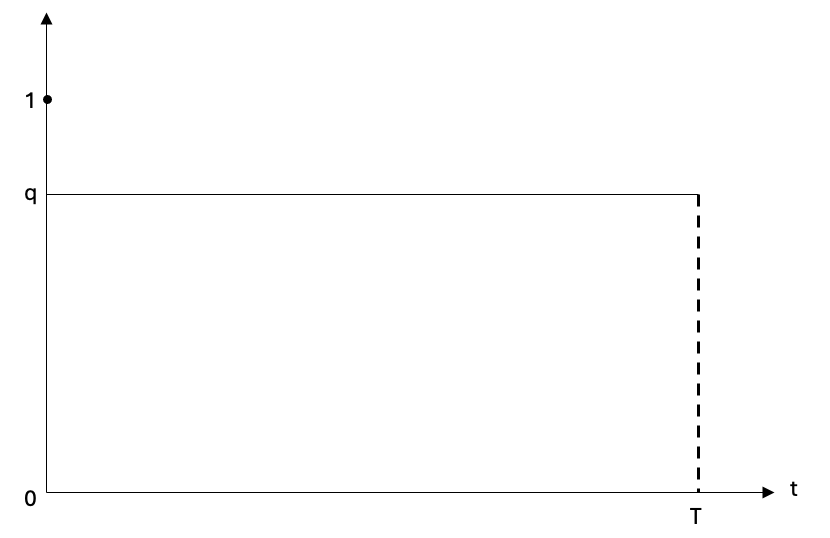}
        \caption{Immediate contagion with constant probability $q$}
    \end{subfigure}
    \begin{subfigure}[b]{0.45\linewidth}
        \includegraphics[width=\linewidth]{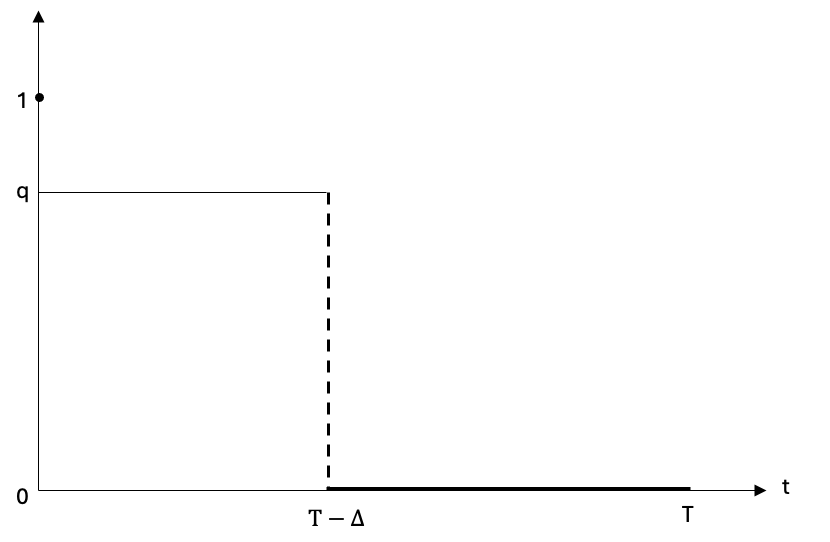}
        \caption{Contagion with probability $q$ if the "natural defaulting" time is early enough}
    \end{subfigure}
    \caption{Two examples of the function $q_{ij}(t)$}
    \label{figure:q_ij}
\end{figure}
In the first example, as soon as member $i$ defaults, he/she introduces a positive, constant probability of bringing down member $j$, which also persists throughout the whole investment period $T$. 
In the second example, member $i$ must default early enough in order for contagion to occur, specifically before the threshold $T - \Delta$.
This second model means contagion needs time to propagate to others.
If member $i$ defaults too late, there is insufficient time for the negative influence to reach other regular members before time $T$, which is represented by the zero value in the plot after $T - \Delta$.

\section{Main Results} \label{section:4}

To compute the group survival probability under this contagion framework, we first state and prove a useful lemma.

\begin{lemma} \label{lemma:group-default-prob-with-natural-default-index-set-I}
    Given an index set $I \subset \{1, 2, ..., n\}$, define the event $A_I = \{\eta_{i\in I} \le T, \eta_{j\in I^c} > T, \text{ all regular members default by } T\}$, where notation $\eta_{i\in I} \le T$ means $\eta_i \le T$ for all $i\in I$ as mentioned in the notation section above.
    We then have 
    \begin{equation} \label{equation:group-default-prob-with-natural-default-index-set-I-eq-1}
            \mathbb{P}(A_I) = \mathbb{E}\left\{ \mathbbm{1}(\eta_{i\in I} \le T)\cdot \prod_{j\in I^c} \left(\mathbb{P}(\eta_j > T)\cdot \left[1 - \prod_{i\in I} (1 - q_{ij}(\eta_i)) \right] \right) \right\}
    \end{equation}
    If $I = \{1, ..., n\}$, the right-side of \cref{equation:group-default-prob-with-natural-default-index-set-I-eq-1} simplifies to $\mathbb{P}(\eta_1, ..., \eta_n \le T)$, consistent with the convention for products over the empty set.
\end{lemma}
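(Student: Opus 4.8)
The plan is to compute $\mathbb{P}(A_I)$ by conditioning on the entire vector of natural default times $\vec\eta = (\eta_1, \dots, \eta_n)$ and then exploiting the conditional independence of the contagion shocks that the framework builds in. On the event $A_I$ the set $I$ is \emph{exactly} the collection of members who default naturally before $T$, while each $j \in I^c$ survives naturally ($\eta_j > T$) and can only reach default by $T$ through contagion from one of the primary defaulters $i \in I$. Conditioning on $\vec\eta$, I would first write
\[
\mathbb{P}(A_I \mid \vec\eta) = \mathbbm{1}(\eta_{i\in I}\le T)\,\mathbbm{1}(\eta_{j\in I^c} > T)\,\mathbb{P}\!\left(\text{every } j\in I^c \text{ is infected} \mid \vec\eta\right),
\]
so that the problem reduces to evaluating the last conditional probability.

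The core computation is that conditional infection probability. Using the assumption that, given $\vec\eta$, the contagion events are conditionally independent across both the triggering members $i$ and the targets $j$, the probability that a fixed target $j$ escapes infection from \emph{all} primary defaulters is $\prod_{i\in I}(1-q_{ij}(\eta_i))$, so $j$ is infected with conditional probability $1 - \prod_{i\in I}(1-q_{ij}(\eta_i))$. Multiplying over the conditionally independent targets $j\in I^c$ gives
\[
\mathbb{P}\!\left(\text{every } j\in I^c \text{ is infected} \mid \vec\eta\right) = \prod_{j\in I^c}\left[1-\prod_{i\in I}(1-q_{ij}(\eta_i))\right].
\]

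Next I would take the expectation over $\vec\eta$. Since each $q_{ij}(\eta_i)$ depends only on $\{\eta_i : i\in I\}$ while the factor $\mathbbm{1}(\eta_{j\in I^c}>T)$ depends only on $\{\eta_j : j\in I^c\}$, mutual independence of the natural default times lets me separate these two groups and evaluate $\mathbb{E}\,\mathbbm{1}(\eta_{j\in I^c}>T) = \prod_{j\in I^c}\mathbb{P}(\eta_j > T)$. Re-absorbing this (constant) product back into the product over $j\in I^c$ combines the escape-complement term with $\mathbb{P}(\eta_j>T)$ factorwise and yields exactly \cref{equation:group-default-prob-with-natural-default-index-set-I-eq-1}. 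For the boundary case $I=\{1,\dots,n\}$ the complement $I^c$ is empty, the product over $I^c$ is the empty product $1$, and the right-hand side collapses to $\mathbb{E}\,\mathbbm{1}(\eta_1,\dots,\eta_n\le T) = \mathbb{P}(\eta_1,\dots,\eta_n\le T)$, as claimed.

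The main obstacle I anticipate is stating the conditional-independence factorization precisely — in particular being explicit that the model is single-stage, so that only the naturally-defaulted members in $I$ act as infection sources and an infected member in $I^c$ does not itself propagate contagion further; otherwise the simple expression $1-\prod_{i\in I}(1-q_{ij}(\eta_i))$ would not capture the infection probability. I would also need to confirm that mutual independence of the natural default times (at least across $I$ and $I^c$) is part of the standing assumptions, since it is exactly what licenses pulling $\prod_{j\in I^c}\mathbb{P}(\eta_j>T)$ out of, and then back into, the expectation.
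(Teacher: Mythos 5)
Your proposal is correct and follows essentially the same route as the paper's proof: condition on the full vector $(\eta_1,\dots,\eta_n)$, use the conditional independence of the contagion indicators (the paper's assumption (A3)) to obtain the factor $\prod_{j\in I^c}\bigl[1-\prod_{i\in I}(1-q_{ij}(\eta_i))\bigr]$, and then use mutual independence of the natural default times (assumption (A2)) to extract $\prod_{j\in I^c}\mathbb{P}(\eta_j>T)$. The caveats you flag at the end (single-stage contagion, independence across $I$ and $I^c$) are exactly the standing assumptions the paper invokes.
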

\begin{proof}
    We proceed by first conditioning on $\eta_{i\in\{1, ..., n\}}$, which refers to the random vector $(\eta_1, .., \eta_n)$ as mentioned before.
    \begin{align*}
        \mathbb{P}(A_I) &= \mathbb{E}\left[\vphantom{\prod_{j\in I^c}}\mathbbm{1}(\eta_{i\in I} \le T)\cdot\mathbbm{1}(\eta_{j\in I^c} > T)\cdot \right.\\
        &\hspace{5em}\left. \mathbb{E}\left[\prod_{j\in I^c}\mathbbm{1}(Y_{ij} = 1 \text{ for some }i\in I) \,\bigg|\, \eta_{i\in\{1, ..., n\}}\right]\right]
    \end{align*}
    This is because a member $j\in I^c$ can only default due to contagion, which means some member $i\in I$ needs to trigger $Y_{ij} = 1$.
    For the inner conditional expectation above, we continue as follows.
    \begin{align*}
        &\quad\,\, \mathbb{E}\left[\prod_{j\in I^c}\mathbbm{1}(Y_{ij} = 1 \text{ for some }i\in I) \,\bigg|\, \eta_{i\in\{1, ..., n\}}\right] \\
        &= \mathbb{E}\left[\prod_{j\in I^c} 1 - \mathbbm{1}(Y_{ij} = 0 \text{ for all }i\in I) \,\bigg|\, \eta_{i\in \{1, ..., n\}}\right]
    \end{align*}
    By using repeatedly assumption (A3), the last expression turns out to be 
    \begin{align*}
        &= \prod_{j\in I^c} \mathbb{E}\left[1 - \mathbbm{1}(Y_{ij} = 0 \text{ for all }i\in I) \,|\, \eta_{i\in \{1, ..., n\}} \right] \\
        &= \prod_{j\in I^c} 1 - \mathbb{P}\left( Y_{ij} = 0 \text{ for all } i \in I \,|\, \eta_{i\in\{1, ..., n\}} \right) \\
        &= \prod_{j\in I^c} \left(1 - \prod_{i\in I} \mathbb{P}\left[Y_{ij} = 0 \,|\, \eta_{i\in\{1, ..., n\}}\right]\right) = \prod_{j\in I^c} \left(1 - \prod_{i\in I}[1-q_{ij}(\eta_i)] \right)
    \end{align*}
    Also, notice the above quantity belongs to the sigma-algebra generated by $\eta_{i\in I}$ denoted by $\sigma(\eta_{i\in I})$.
    Substitute this result back into the first equation in the proof, 
    \begin{align*}
        \mathbb{P}(A_I) &= \mathbb{E}\left[\mathbbm{1}(\eta_{i\in I} \le T)\cdot \mathbbm{1}(\eta_{j\in I^c} > T)\cdot \prod_{j\in I^c} \left(1 - \prod_{i\in I}[1 - q_{ij}(\eta_i)] \right)\right] \\
        &= \mathbb{E}\left[\mathbbm{1}(\eta_{i\in I} \le T)\cdot \prod_{j\in I^c} \left(1 - \prod_{i\in I}[1 - q_{ij}(\eta_i)] \right)\cdot \mathbb{E}[\mathbbm{1}(\eta_{j\in I^c} > T) \,|\, \eta_{i\in I}] \right] \\
        &= \mathbb{E}\left[\mathbbm{1}(\eta_{i\in I} \le T)\cdot \prod_{j\in I^c} \left(1 - \prod_{i\in I}[1 - q_{ij}(\eta_i)] \right)\cdot \prod_{j\in I^c}\mathbb{P}(\eta_j > T)\right]
    \end{align*}
    where the last step is because all $\eta_i$'s are independent by assumption (A2).
\end{proof}

\begin{theorem} \label{theorem:general-group-surviving-rate}
    The group survival probability at time $T$ is 
    $$\mathbb{P}(\text{group still survives by time } T) = \mathbb{P}(\eta_0 > T)\cdot \left(1 - \sum_{k=1}^n \sum_{I \in \mathcal{I}^n_k} \mathbb{P}(A_I) \right)$$
    where the event $A_I$ and $\mathcal{I}^n_k$ are defined in \Cref{lemma:group-default-prob-with-natural-default-index-set-I} and \Cref{section:2} respectively.
\end{theorem}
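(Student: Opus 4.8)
The plan is to reduce the theorem to a clean decomposition of the complementary (total-default) event and then peel off the central member. First I would recall from the model setup that the group survives by time $T$ precisely when the central member survives, $\{\eta_0 > T\}$, and at least one regular member has not defaulted by $T$. Writing $D = \{\text{all regular members default by } T\}$ for the total-default event among the regular members, group survival is therefore the intersection $\{\eta_0 > T\} \cap D^c$, and the whole proof comes down to evaluating $\mathbb{P}(\{\eta_0 > T\} \cap D^c)$.

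The key step is to express $\mathbb{P}(D)$ through the events $A_I$ of \Cref{lemma:group-default-prob-with-natural-default-index-set-I}. On the event $D$, let $I = \{i : \eta_i \le T\}$ denote the (random) set of members who default on their own clock. Each event $A_I$ records precisely that this natural-default set equals $I$ (through the conditions $\eta_{i\in I}\le T$ and $\eta_{j\in I^c} > T$) together with the requirement that every regular member defaults by $T$. I would then verify two things: that the family $\{A_I\}$ is pairwise disjoint, since distinct $I$ force incompatible natural-default sets, and that $\bigcup_{I} A_I = D$. The subtlety is the empty-set boundary: if $I = \emptyset$ then no member defaults naturally, and because contagion can only be triggered by a naturally-defaulting member, no regular member defaults at all; hence $A_\emptyset$ is impossible and the union need only range over nonempty $I$. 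This is exactly why the sum starts at $k=1$, and it yields $\mathbb{P}(D) = \sum_{k=1}^n \sum_{I \in \mathcal{I}^n_k} \mathbb{P}(A_I)$, so that $\mathbb{P}(D^c) = 1 - \sum_{k=1}^n \sum_{I \in \mathcal{I}^n_k} \mathbb{P}(A_I)$.

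Finally I would factor out the central member. By assumption (A2) the default time $\eta_0$ is independent of $(\eta_1, \dots, \eta_n)$, and the event $D^c$ is measurable with respect to the regular members and the contagion indicators $Y_{ij}$, none of which involve $\eta_0$; hence $\{\eta_0 > T\}$ is independent of $D^c$. Therefore $\mathbb{P}(\{\eta_0 > T\} \cap D^c) = \mathbb{P}(\eta_0 > T)\,\mathbb{P}(D^c)$, which is exactly the claimed formula.

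The main obstacle is the partition argument in the second step: one must check carefully that the events $A_I$ genuinely tile the total-default event $D$, and in particular that the degenerate case $I = \emptyset$ contributes nothing. Once the definition of group survival and the disjointness and exhaustiveness of the $A_I$ are pinned down, the independence splitting and the remaining algebra are routine.
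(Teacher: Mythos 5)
Your proposal is correct and follows essentially the same route as the paper: factor out the leader's survival using independence (A2), then decompose the total-default event for the regular members into the disjoint events $A_I$ indexed by the natural-default set, summing over nonempty $I$. Your explicit handling of the $I = \emptyset$ boundary case and of the measurability/independence justification for pulling out $\mathbb{P}(\eta_0 > T)$ are welcome elaborations of steps the paper treats more briefly, but they do not constitute a different argument.
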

\begin{proof}
    Recall the definition of ``group default" right above \cref{equation:group-default-def}, we compute 
    \begin{align*}
            &\quad\,\, \mathbb{P}(\text{group still survives by } T) \\ 
            &= 1 - \mathbb{P}(\text{leader defaults}) - \mathbb{P}(\text{leader stands, all regular members default}) \\
            &= 1 - \mathbb{P}(\eta_0 \le T) - \mathbb{P}(\eta_0 > T)\cdot \mathbb{P}(\text{all regular members default}) \\
            &= \mathbb{P}(\eta_0 > T)\cdot [1 - \mathbb{P}(\text{all regular members default})]
    \end{align*}
    Now, the probability that all regular members default is the sum over all possible ways this can happen: some subset $I \subset \{1, ..., n\}$ defaults naturally, while the rest in $I^c$ default via contagion.
    Each such scenario corresponds to the event $A_I$, and the events $A_I$ are disjoint across $I$. Hence, using the expression for $\mathbb{P}(A_I)$ derived in \Cref{lemma:group-default-prob-with-natural-default-index-set-I}, we have
    $$\mathbb{P}(\text{all regular members default}) = \sum_{k=1}^n\sum_{I \in \mathcal{I}_k^n} \mathbb{P}(A_I)$$
    leading to the desired result.
\end{proof}

With the relation established in the theorem above, we now use the next two corollaries to illustrate two special cases of $q_{ij}(t)$. The first special case is called ``constant" contagion because once someone defaults, there is a constant positive probability of bringing down others. The second case occurs when contagion requires an early enough default time. After presenting these two results, we then explore some properties of the ``constant" contagion scenario in more detail.

\begin{corollary}[Immediate Constant Contagion]
\label{corollary:immediate-contagion}
    Suppose the contagion function is constant: $q_{ij}(t) \equiv q_{ij}\cdot \mathbbm{1}(t \le T)$ for $t \ge 0$ (refer to \cref{figure:q_ij}(a)); then 
    \begin{equation} \label{equation:immediate-contagion-eq-1}
        \mathbb{P}(\text{group survives at time }T) = \mathbb{P}(\eta_0 > T)\cdot \left(1 - \sum_{k=1}^n \sum_{I\in \mathcal{I}^n_k} \beta_I\gamma_I \right)
    \end{equation}
    where 
    \begin{equation} \label{equation:immediate-contagion-eq-2}
    \begin{aligned}
        \beta_{I} &= \prod_{i\in I} \mathbb{P}(\eta_i \le T) \\
        \gamma_{I} &=
        \prod_{j\in I^c} \left\{ \mathbb{P}(\eta_j > T) \cdot \left[ 1-\prod_{i \in I} (1-q_{ij}) \right] \right\}
    \end{aligned}
    \end{equation}
    (If $I = \{1, ..., n\}$, then $\gamma_I := 1$ by convention, as mentioned in \Cref{section:2}.)
\end{corollary}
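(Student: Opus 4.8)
The plan is to specialize the general formula from \Cref{lemma:group-default-prob-with-natural-default-index-set-I} to the constant contagion function and then feed the result into \Cref{theorem:general-group-surviving-rate}. Concretely, I would show that under $q_{ij}(t) \equiv q_{ij}\cdot\mathbbm{1}(t\le T)$ the single-scenario probability factors as $\mathbb{P}(A_I) = \beta_I\gamma_I$; substituting this identity into the theorem's formula then yields \cref{equation:immediate-contagion-eq-1} directly, with $\beta_I$ and $\gamma_I$ as in \cref{equation:immediate-contagion-eq-2}.

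The crucial observation is that inside the expectation of \cref{equation:group-default-prob-with-natural-default-index-set-I-eq-1} the contagion terms $q_{ij}(\eta_i)$ are multiplied by the indicator $\mathbbm{1}(\eta_{i\in I}\le T)$. On the support of this indicator every $\eta_i$ with $i\in I$ satisfies $\eta_i \le T$, so $\mathbbm{1}(\eta_i\le T) = 1$ and hence $q_{ij}(\eta_i) = q_{ij}$ for each $i\in I$ and $j\in I^c$. Therefore the entire product $\prod_{j\in I^c}\left(\mathbb{P}(\eta_j > T)\cdot[1 - \prod_{i\in I}(1 - q_{ij}(\eta_i))]\right)$ coincides, on that support, with the deterministic constant $\gamma_I$.

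With the random product replaced by the constant $\gamma_I$, I would pull $\gamma_I$ out of the expectation, leaving $\mathbb{P}(A_I) = \gamma_I\cdot\mathbb{E}[\mathbbm{1}(\eta_{i\in I}\le T)] = \gamma_I\cdot\mathbb{P}(\eta_i \le T \text{ for all } i\in I)$. Assumption (A2), the independence of the $\eta_i$, then factors this joint probability into $\prod_{i\in I}\mathbb{P}(\eta_i\le T) = \beta_I$, giving $\mathbb{P}(A_I) = \beta_I\gamma_I$. The degenerate case $I = \{1,\ldots,n\}$ is handled by the empty-product convention: $\gamma_I = 1$ and $\beta_I = \mathbb{P}(\eta_1,\ldots,\eta_n\le T)$, matching the boundary statement of the lemma.

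The argument is largely a routine substitution, so there is no deep obstacle; the one point that needs genuine care is the justification that $q_{ij}(\eta_i)$ may be replaced by the constant $q_{ij}$ inside the product. This replacement is valid \emph{only} because the product sits under the indicator $\mathbbm{1}(\eta_{i\in I}\le T)$, and one must confirm that the resulting factor is free of every $\eta_i$ before extracting it from the expectation, since it is precisely this constancy that reduces the remaining expectation to a joint probability to which the independence step in (A2) can be applied.
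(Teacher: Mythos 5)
Your proposal is correct and follows essentially the same route as the paper: specialize \Cref{lemma:group-default-prob-with-natural-default-index-set-I} under the constant contagion hypothesis so that $\mathbb{P}(A_I)=\beta_I\gamma_I$, then substitute into \Cref{theorem:general-group-surviving-rate}. You actually spell out more carefully than the paper does why $q_{ij}(\eta_i)$ may be replaced by $q_{ij}$ (namely, that the product sits under the indicator $\mathbbm{1}(\eta_{i\in I}\le T)$), which the paper passes over with ``we can easily declare.''
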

\begin{proof}
    In virtue of \Cref{lemma:group-default-prob-with-natural-default-index-set-I} and the hypothesis of $q_{ij}(t) = q_{ij}\mathbbm{1}(t \le T)$, we can easily declare the following:
    \begin{align*}
        \mathbb{P}(A_I) &= \mathbb{P}(\eta_{i\in I} \le T)\cdot \prod_{j\in I^c} \left( \mathbb{P}(\eta_j > T)\cdot \left[1 - \prod_{i\in I} (1-q_{ij}) \right] \right) \\
        &= \prod_{i\in I} \mathbb{P}(\eta_{i} \le T)\cdot \prod_{j \in I^c} \left(\mathbb{P}(\eta_j > T)\cdot \left[1 - \prod_{i\in I} (1-q_{ij}) \right] \right) \\
        &= \beta_I\gamma_I
    \end{align*}
    Now, the proof concludes by invoking \Cref{theorem:general-group-surviving-rate}.
\end{proof}

We now present the second special case mentioned above, where contagion occurs only if members default early enough.

\begin{corollary}[Contagion requires early default]
\label{corollary:delayed-contagion}
    Suppose contagion only occurs when members default early, i.e. before time $T - \Delta$: $q_{ij}(t) = q_{ij}\cdot \mathbbm{1}(t \le T - \Delta)$ for $t \ge 0$ (refer to \cref{figure:q_ij}(b)), then the group survival probability is
    \begin{equation} \label{equation:delayed-contagion-eq-1}
        \mathbb{P}(\text{group survives by } T) = \mathbb{P}(\eta_0 > T)\cdot \left(
            1 - \sum_{k=1}^{n} \sum_{I \in \mathcal{I}^n_k} \sum_{\substack{I'\subset I \\ I'\ne \emptyset}} \beta_{I'} \cdot \gamma_{I, I'} \cdot \tau_{I, I'} \right)
    \end{equation}
    where 
    \begin{equation} \label{equation:delayed-contagion-eq-2}
        \begin{aligned}
            \beta_{I'} &= \prod_{i \in I'} \mathbb{P}(\eta_i \le T-\Delta) \\
            \gamma_{I, I'} &= \prod_{i \in I\setminus I'} \mathbb{P}(\eta_i \in (T-\Delta, T]) \\
            \tau_{I, I'} &= \prod_{j\in I^c} \left( \mathbb{P}(\eta_j > T)\cdot \left[ 1 - \prod_{i\in I'} (1-q_{ij}) \right] \right)
            \end{aligned}
    \end{equation}
    As mentioned in \Cref{section:2}, by convention, if $I = \{1, ..., n\}$, then $\tau_{I, I'} = 1$ and if $I = I'$, $\gamma_{I, I'} = 1$.
\end{corollary}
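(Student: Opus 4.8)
The plan is to specialize the formula for $\mathbb{P}(A_I)$ from \Cref{lemma:group-default-prob-with-natural-default-index-set-I} to the step-contagion function $q_{ij}(t) = q_{ij}\mathbbm{1}(t \le T-\Delta)$, and then reorganize the resulting expectation according to which members of $I$ default \emph{early} (before $T-\Delta$) versus \emph{late} (in $(T-\Delta, T]$). Once this split is made explicit, the random contagion factor becomes deterministic, independence (A2) factorizes the remaining expectation, and summing over all admissible splits reproduces the inner sum over $I'$. Substituting the identified value of $\mathbb{P}(A_I)$ into \Cref{theorem:general-group-surviving-rate} then yields \cref{equation:delayed-contagion-eq-1}.

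First I would insert $q_{ij}(\eta_i) = q_{ij}\mathbbm{1}(\eta_i \le T-\Delta)$ into the Lemma, so that $1 - q_{ij}(\eta_i)$ equals $1-q_{ij}$ when $\eta_i \le T-\Delta$ and equals $1$ otherwise. The crucial observation is that on the event $\{\eta_{i\in I}\le T\}$ each $\eta_i$ with $i\in I$ lands in exactly one of $[0,T-\Delta]$ or $(T-\Delta,T]$; writing $I'=\{i\in I:\eta_i\le T-\Delta\}$ for the (random) set of early defaulters, the late defaulters contribute only trivial factors and
$$\prod_{i\in I}\left(1 - q_{ij}(\eta_i)\right) = \prod_{i\in I'}(1 - q_{ij}).$$
Hence the contagion factor attached to each $j\in I^c$ reduces to $1 - \prod_{i\in I'}(1-q_{ij})$, which depends on the sample only through $I'$.

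Next I would decompose $\mathbbm{1}(\eta_{i\in I}\le T)$ as the disjoint sum over $I'\subset I$ of $\prod_{i\in I'}\mathbbm{1}(\eta_i \le T-\Delta)\cdot\prod_{i\in I\setminus I'}\mathbbm{1}\!\left(\eta_i \in (T-\Delta,T]\right)$, the term indexed by $I'$ encoding ``the early defaulters are exactly $I'$''. For each fixed $I'$ the product over $j\in I^c$ is now deterministic, so it pulls out of the expectation as $\tau_{I,I'}$, while assumption (A2) factorizes the surviving expectation into $\beta_{I'}\cdot\gamma_{I,I'}$. Summing over $I'$ then gives $\mathbb{P}(A_I)=\sum_{I'\subset I}\beta_{I'}\gamma_{I,I'}\tau_{I,I'}$, with $\beta,\gamma,\tau$ exactly as in \cref{equation:delayed-contagion-eq-2}.

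The step I expect to require the most care is justifying the restriction $I'\ne\emptyset$ and the boundary case. When $I\ne\{1,\dots,n\}$, so $I^c\ne\emptyset$, the term $I'=\emptyset$ carries the empty product $\prod_{i\in\emptyset}(1-q_{ij})=1$, whence $1-\prod_{i\in I'}(1-q_{ij})=0$ and $\tau_{I,\emptyset}=0$; this matches the intuition that with no early defaulter there is no trigger for contagion, so $I^c$ cannot default, and dropping $I'=\emptyset$ is harmless. The degenerate case $I=\{1,\dots,n\}$, where $\tau_{I,I'}=1$ by the empty-product convention of \Cref{section:2}, must be treated separately: here no contagion is needed and $\mathbb{P}(A_I)=\prod_{i}\mathbb{P}(\eta_i\le T)$ directly, so one must read the decomposition under the stated conventions to confirm that the contributions over the admissible $I'$ recover this value. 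With $\mathbb{P}(A_I)$ thus identified for every $I$, plugging into \Cref{theorem:general-group-surviving-rate} completes the proof.
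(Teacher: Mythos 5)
Your approach is essentially the paper's: the paper decomposes $\{\eta_{i\in I}\le T\}$ into the disjoint events $B_{I,I'}=\{\eta_{i\in I'}\le T-\Delta,\ \eta_{i\in I\setminus I'}\in(T-\Delta,T]\}$ and reruns the Lemma's conditioning argument on each, whereas you specialize the Lemma's already-proved formula and split the indicator inside the expectation; these are the same computation, and your version is marginally more economical because it reuses the Lemma as a black box. The identity $\prod_{i\in I}(1-q_{ij}(\eta_i))=\prod_{i\in I'}(1-q_{ij})$ on the event that the early defaulters are exactly $I'$, the factorization by (A2) into $\beta_{I'}\gamma_{I,I'}\tau_{I,I'}$, and the final substitution into \Cref{theorem:general-group-surviving-rate} are all correct.

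The one step you deferred --- confirming that for $I=\{1,\dots,n\}$ the contributions over the admissible $I'\ne\emptyset$ recover $\mathbb{P}(A_I)=\prod_i\mathbb{P}(\eta_i\le T)$ --- is exactly the place where the bookkeeping does not close as stated, so you should carry it out rather than leave it as a remark. Your own (correct) derivation gives $\mathbb{P}(A_I)=\sum_{I'\subset I}\beta_{I'}\gamma_{I,I'}\tau_{I,I'}$ over \emph{all} subsets $I'$, and when $I=\{1,\dots,n\}$ the convention $\tau_{I,I'}=1$ makes this
\[
\sum_{I'\subset I}\beta_{I'}\gamma_{I,I'}=\prod_{i=1}^{n}\bigl[\mathbb{P}(\eta_i\le T-\Delta)+\mathbb{P}(\eta_i\in(T-\Delta,T])\bigr]=\prod_{i=1}^{n}\mathbb{P}(\eta_i\le T),
\]
as required; but dropping $I'=\emptyset$ removes the term $\prod_{i=1}^{n}\mathbb{P}(\eta_i\in(T-\Delta,T])$, which is the (generally positive) probability that every member defaults naturally but too late to infect anyone --- a scenario that still counts as all regular members defaulting. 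Your observation that $\tau_{I,\emptyset}=0$ justifies omitting $I'=\emptyset$ only when $I^c\ne\emptyset$; for the full set it does not apply. So the clean way to finish is to keep the sum over all $I'\subset I$ (this changes nothing for $I\ne\{1,\dots,n\}$) and note that the restriction $I'\ne\emptyset$ in \cref{equation:delayed-contagion-eq-1} must be read as excluding only the proper-subset terms, or else the single term $I=\{1,\dots,n\}$, $I'=\emptyset$ must be reinstated. The paper's own proof sums over all $I'\subset I$ and does not address this boundary case either, so completing the check is a genuine improvement rather than a detour.
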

\begin{proof}
    Since the contagion factor $q_{ij}(t) = 0$ after $t > T - \Delta$, we can further split up the naturally defaulting group $I$ into two subsets: one subset $I'$ defaults early enough, i.e. $\eta_i \le T - \Delta$, and the other subset $I\setminus I'$ defaults too late for contagion to take place, i.e. $\eta_i \in (T - \Delta, T]$.
    To be more precise, given the naturally defaulting index set $I$, write 
    $$ \{\eta_{i\in I} \le T\} = \bigcup_{I'\subset I} \big\{\eta_{i\in I'} \le T-\Delta, \eta_{i\in I\setminus I'} \in (T-\Delta, T] \big\} $$
    where $I'$ ranges over all possible subsets of $I$, and we emphasize that the right side of the above equation is a disjoint union. For notational simplicity, we name each event $\{\eta_{i\in I'} \le T - \Delta, \eta_{i\in I\setminus I'} \in (T-\Delta, T]\}$ as $B_{I, I'}$, so that $\{\eta_{i\in I} \le T\} = \cup_{I' \subset I} B_{I,I'}$.
    
    Use similar reasoning as in \Cref{lemma:group-default-prob-with-natural-default-index-set-I}, for each event $B_{I, I'}$, we have
    \begin{align*}
        &\quad\,\, \mathbb{P}(B_{I,I'}, \eta_{j\in I^c} > T,  \text{ all regular members default}) \\
        &= \mathbb{E}\left[ \mathbbm{1}_{B_{I, I'}}\cdot \prod_{j\in I^c} \left(\mathbb{P}(\eta_j > T)\cdot \left[1 - \prod_{i\in I'} (1-q_{ij}) \right] \right) \right] \\
        &= \prod_{j\in I^c} \left(\mathbb{P}(\eta_j > T)\cdot \left[1 - \prod_{i\in I'} (1-q_{ij}) \right] \right) \cdot \mathbb{E}\left[\mathbbm{1}_{B_{I,I'}}\right] \\
        &= \prod_{j\in I^c} \left(\mathbb{P}(\eta_j > T)\cdot \left[1 - \prod_{i\in I'} (1-q_{ij}) \right] \right) \cdot \\
        &\hspace{7em}\prod_{i\in I'} \mathbb{P}(\eta_i \le T - \Delta) \cdot \prod_{i\in I\setminus I'} \mathbb{P}(\eta_i \in (T-\Delta, T])
    \end{align*}
    Notice the product involving $q_{ij}$'s is taken over the set $I' \subset I$, because in this setting, only those who default early enough can potentially bring others down.
    The rest of the proof is summing over all index sets $I \subset \{1, ..., n\}$ and subsets $I' \subset I$, a process similar to the one in \Cref{theorem:general-group-surviving-rate}.
\end{proof}

\noindent Remarks:
\begin{enumerate}[label=(\alph*)]
    \item The triple summation term in \cref{equation:delayed-contagion-eq-1} corresponds to the following process: first, select $k$ members out of $n$ to default on their own; then, among them, select a subset $I'$ to represent those who default on their own before time $T-\Delta$. This leaves the members in $I\setminus I'$ to default naturally during the interval $(T-\Delta, T]$.

    \item If $\Delta = 0$, we return to the constant contagion case and claim that \cref{equation:delayed-contagion-eq-1} agrees with \cref{equation:immediate-contagion-eq-1}. Indeed, in \cref{equation:delayed-contagion-eq-1}, when $I' \subsetneq I$, $\gamma_{I, I'} = 0$ because $\mathbb{P}(\eta_i \in (T-0, T]) = \mathbb{P}(\eta_i \in \emptyset) = 0$; and when $I' = I$, $\gamma_{I, I'} = 1$ by convention. So, the triple summation reduces to a double summation. Moreover, when $I = \{1, ..., n\}$, we have $\tau_{I, I'} = 1$ for any $I' \subset I$.
    
    \item On the other hand, if $\Delta > T$, then defaults can never trigger contagion, i.e. contagion is not possible. In that case, every term in the triple summation vanishes, since $\beta_{I'} = 0$ for all $I'$.
    
    \item In summary, as we vary the value of $\Delta$, \Cref{corollary:immediate-contagion} emerges as a special case of \Cref{corollary:delayed-contagion}.
\end{enumerate}

\subsection{Special Cases for Constant Contagion}

In this subsection, we revisit the constant contagion case and explore a few important properties.
Suppose that contagion is uniform across the group, i.e. $q_{ij} = q$ for all $i, j = 1,..., n$ with $i\ne j$. 
Under this assumption, the expression for $\gamma_{I}$ in \cref{equation:immediate-contagion-eq-2} simplifies to 
\begin{equation} \label{equation:homogeneous-group-immediate-contagion-eq-1}
    \gamma_{I} =  \left[ \prod_{j\in I^c} \mathbb{P}(\eta_j > T) \right] \cdot \left[ 1 - (1-q)^{|I|} \right]^{n-|I|}
\end{equation}
where $|I|$ is the number of members who default naturally.
Substituting this into \cref{equation:immediate-contagion-eq-1}, we see that the group survival probability is monotone decreasing in $q$.
That is, the group is more likely to survive when the contagion probability $q$ is smaller.
This aligns with intuition: the higher the contagion intensity, the more likely other members are to be brought down when a member defaults.

Next, let us examine the boundary cases $q = 0$ and $q = 1$. 
Recall \Cref{theorem:general-group-surviving-rate} and substitute the simplified expression for $\gamma_I$ from \cref{equation:homogeneous-group-immediate-contagion-eq-1} into \cref{equation:immediate-contagion-eq-1}.
This yields:
\begin{align} \label{equation: reference-of-theorem-1}
    &\quad\,\, \mathbb{P}(\text{group still survives by } T) \nonumber \\
    &= \mathbb{P}(\eta_0 > T)\cdot \left(1 - \sum_{k=1}^n \sum_{I \in \mathcal{I}^n_k} \left\{\prod_{i\in I} \mathbb{P}(\eta_i \le T) \right\}\cdot \right. \nonumber\\
    &\hspace{12em} \left.\left\{\left[\prod_{j\in I^c} \mathbb{P}(\eta_j > T)\right]\cdot \left[1 - (1-q)^k\right]^{n-k} \right\} \right)
\end{align}
Substituting $q = 0$ into \cref{equation:homogeneous-group-immediate-contagion-eq-1}, and observe that
$[1 - 1^{|I|}]^{n-|I|} = \mathbbm{1}(|I| = n)$ by our convention from \Cref{section:2}, namely $0^0 = 1$. 
The first multiplication term $\prod_{j\in I^c} \mathbb{P}(\eta_i > T)$ of $\gamma_I$ equals $1$ if $|I| = n$ by our convention on the multiplication over an empty set. Combining the two terms, we have
$\gamma_I = \mathbbm{1}(|I| = n)$.
Plugging this into \cref{equation: reference-of-theorem-1} yields
\begin{equation} \label{equation:homogeneous-group-immediate-contagion-eq-2}
    \mathbb{P}(\text{group still survives by time }T) = \mathbb{P}(\eta_0 > T)\cdot \left[ 1 - \prod_{i=1}^n \mathbb{P}(\eta_i \le T) \right]
\end{equation}
Next, set $q = 1$ in \cref{equation:homogeneous-group-immediate-contagion-eq-1}.
In this case, for any non-empty set $I$, $\gamma_I = \prod_{j\in I^c} \mathbb{P}(\eta_j > T)$.
Plugging this into \cref{equation: reference-of-theorem-1} gives
\begin{align} \label{equation:homogeneous-group-immediate-contagion-eq-3}
    &\quad\,\, \mathbb{P}(\text{group still survives by time }T) \nonumber \\
    &= \mathbb{P}(\eta_0 > T)\cdot \left(1 - \sum_{k=1}^n \sum_{I \in \mathcal{I}^n_k} \left[ \prod_{i\in I} \mathbb{P}(\eta_i \le T) \right] \cdot \left[ \prod_{j\in I^c} \mathbb{P}(\eta_j > T) \right]\right) \nonumber \\
    &= \mathbb{P}(\eta_0 > T)\cdot \left(1 - \sum_{k=1}^n \sum_{I \in \mathcal{I}^n_k} \mathbb{P}(\eta_{i\in I} \le T, \eta_{j \in I^c} > T) \right) \nonumber \\
    &= \mathbb{P}(\eta_0 > T)\cdot \left(1 - \mathbb{P}\left(\cup_{i=1}^n \{\eta_i \le T\}\right) \right) \nonumber \\
    &= \mathbb{P}(\eta_0 > T)\cdot \prod_{i=1}^n \mathbb{P}(\eta_i > T) = \prod_{i=0}^n \mathbb{P}(\eta_i > T)
\end{align}

Actually one can directly compute \cref{equation:homogeneous-group-immediate-contagion-eq-2} and \cref{equation:homogeneous-group-immediate-contagion-eq-3} as follows.
When $q=0$, all members act independently; 
the group survives as long as at least one regular member remains by $T$.
When $q = 1$, the default of any member cause the entire group to default.
Since $\gamma_I$ is monotone decreasing in $q$, 
\cref{equation:homogeneous-group-immediate-contagion-eq-2} gives the upper bound and \cref{equation:homogeneous-group-immediate-contagion-eq-3} the lower bound for the group survival probability of under the assumption of $q_{ij}(t)=q$. 

Now, suppose we \textit{further} assume regular group members have identical performance, i.e. $\{\eta_i\}_{i = 1, ..., n}$ are identically distributed.
This could occur, for example if all regular group members perform the same type of investment.
Then, \cref{equation: reference-of-theorem-1} simplifies to 
\begin{equation} \label{equation:homogeneous-group-immediate-contagion-eq-4}
    \mathbb{P}(\eta_0 > T) \cdot \left\{ 1 - \sum_{k=1}^n {n\choose k} \cdot \mathbb{P}(\eta_1 \le T)^k \cdot \mathbb{P}(\eta_1 > T)^{n-k} \cdot \left[ 1 - (1-q)^k \right]^{n-k} \right\}
\end{equation}
for $q \in (0, 1)$.
This assumption is reasonable in situations where there is a single investment plan, but the organizer wishes to spread the risk by having several people execute it.
Without contagion, increasing the number of members always reduces the chance that all regular members default before $T$,
unless $\mathbb{P}(\eta_i \le T)$ depends on the group size $n$ as analyzed by \textcite{protter2022quintos}.
However, when contagion is present, our experimental computations show that there exists an optimal group size, and the group's survival probability converges to $0$ as group size tends to infinity.
This observation motivates the next subsection, where we analyze \cref{equation:homogeneous-group-immediate-contagion-eq-4} in detail to understand how the optimal group size depends on the contagion factor $q$ and the individual survival probability $\mathbb{P}(\eta_1 \le T)$.

\subsection{Optimal Group Size under Constant Contagion and Homogeneity}
For notational simplicity, set $c_1 := \mathbb{P}(\eta_1 \le T), c_2 := \mathbb{P}(\eta_1 > T) = 1 - c_1$ and $c_3 := 1 - q$. The summation term from \cref{equation:homogeneous-group-immediate-contagion-eq-4} becomes
\begin{equation} \label{equation:homogeneous-group-immediate-contagion-eq-5}
    S_n := \sum_{k=1}^n {n\choose k}\cdot c_1^k\cdot c_2^{n-k}\cdot (1 - c_3^k)^{n-k}
\end{equation}
Before studying how $n$ affects $S_n$, we briefly examine the roles of the other two parameters, namely the individual's survival probability $c_2$ and 1 minus the contagion factor, i.e. $c_3$.

Fix $c_2$ and let $c_3$ vary from $0$ to $1$.
For $k < n$, $(1 - c_3^k)^{n-k}$ decreases with $c_3$, while the $k=n$ term is constant;
thus $S_n$ is non-increasing in $c_3$ with $S_n = 1 - c_2^n$ when $c_3 = 0$ and $S_n = (1 - c_2)^n$ when $c_3 = 1$.
Now, suppose we have two scenarios, characterized by $(c_2, c_3)$ and $(c_2', c_3')$, with $c_3 > c_3'$. 
This means the first scenario has a weaker contagion effect.
Suppose $c_3$ is close to 1 and $c_3'$ is close to 0.
Because $S_n$ is continuous with respect to $c_3$, it is close to $c_1^n = (1-c_2)^n$ in the first case, and close to $1-c_2'^n$ in the second case.
It is possible to have $c_2 < c_2'$ but still $(1 - c_2)^n < 1 - c_2'^n$.
For example take $c_2 = 0.2$, $c_2' = 0.9$, $n = 5$.
This shows that when contagion factor is high, even if each individual is performing well (characterized by the higher $c_2'$ in the example above), the whole group may still perform poorly (characterized by the higher $S_n$ value.)
From this discussion, we see that the contagion factor $q$ can play a decisive role in group performance, sometimes outweighing differences in individual survival probability.

In the next proposition and theorem, we study the effect of group size on the group's surviving probability when $0 < c_1, c_2, c_3 < 1$. The boundary cases $c_1 = 0, c_1 = 1$ are excluded as they represent no natural default at all and guaranteed natural default respectively.
Figure \ref{figure:surviving-rate-against-group-size-with-diff-q} plots the probability of the group surviving, i.e. \cref{equation:homogeneous-group-immediate-contagion-eq-4}, under different contagion factors $q$, with the group size $n$ on the horizontal axis. For illustration, we fix $\mathbb{P}(\eta_1 > T) = 0.3$ and $\mathbb{P}(\eta_0 > T) = 0.5$ in all four graphs. 
\begin{figure}[h!]
    \centering
    \begin{subfigure}[b]{0.475\linewidth}
        \includegraphics[width=\linewidth]{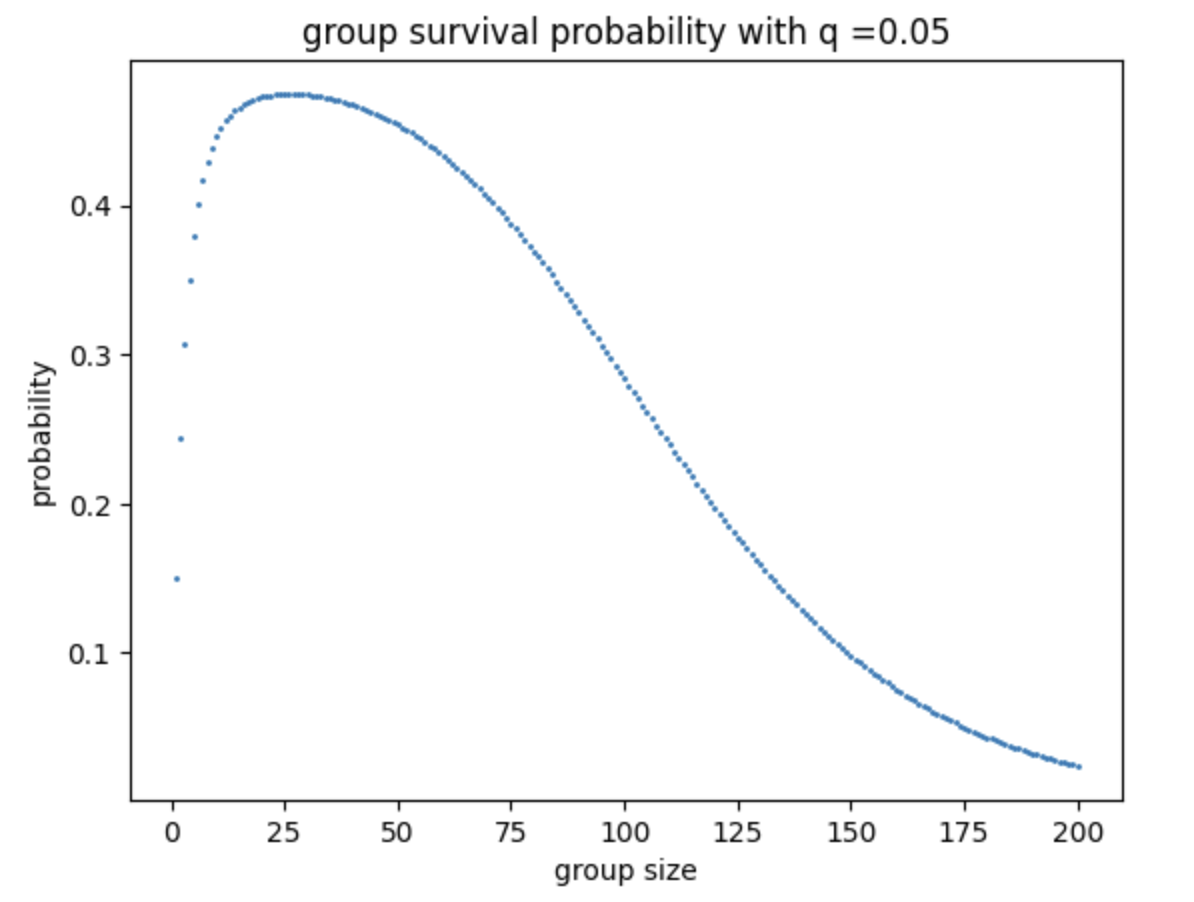}
        \caption{Homogeneous contagion factor $q = 0.05$}
    \end{subfigure}
    \begin{subfigure}[b]{0.475\linewidth}
        \includegraphics[width=\linewidth]{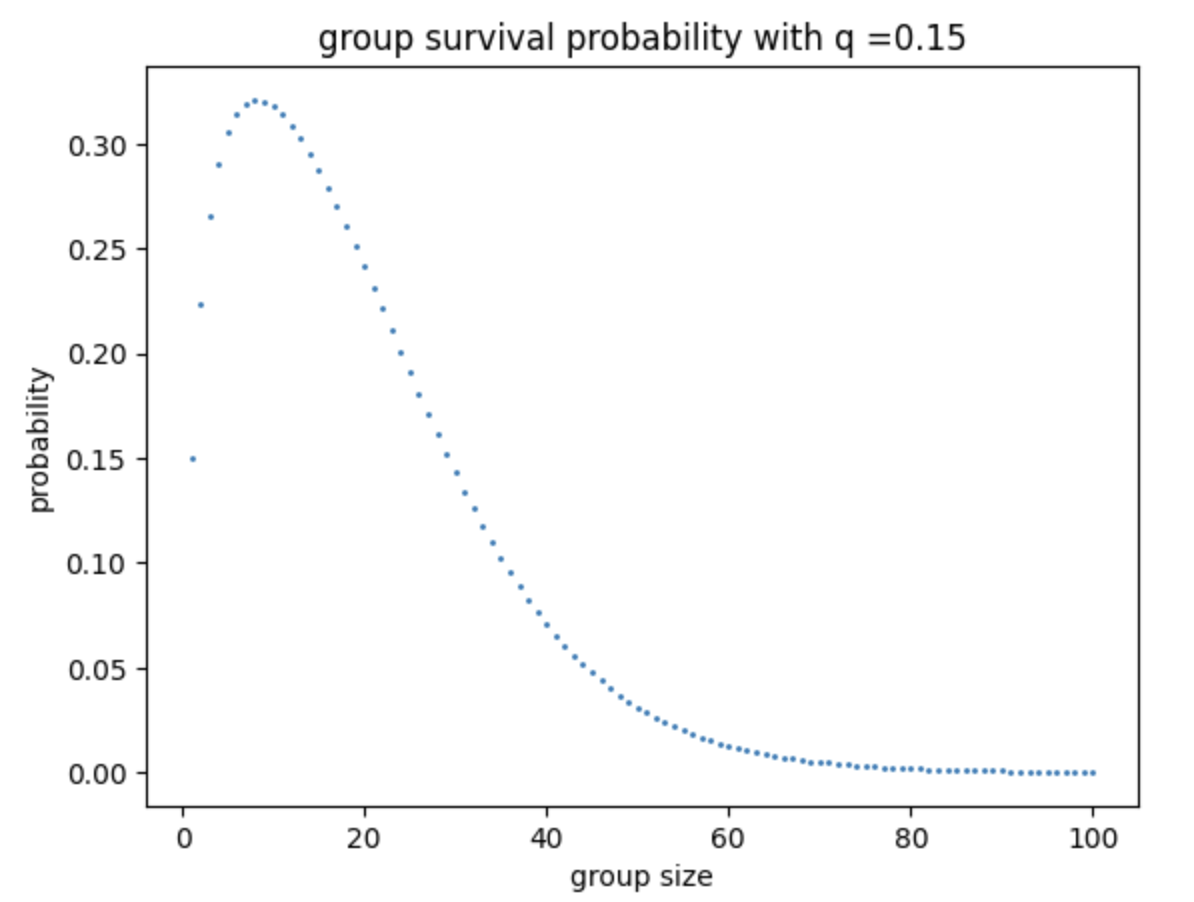}
        \caption{Homogeneous contagion factor $q = 0.15$}
    \end{subfigure}
    \begin{subfigure}[b]{0.475\linewidth}
        \includegraphics[width=\linewidth]{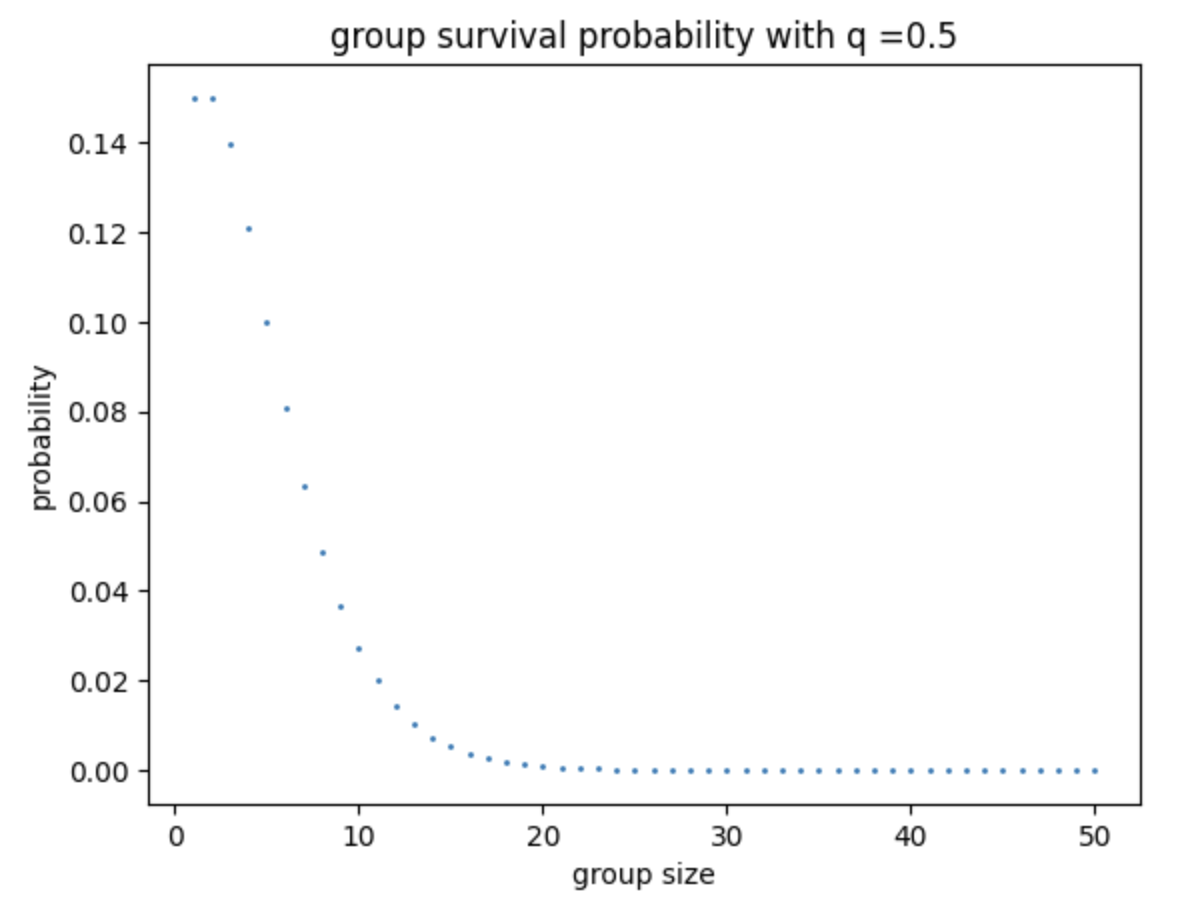}
        \caption{Homogeneous contagion factor $q = 0.5$}
    \end{subfigure}
    \begin{subfigure}[b]{0.475\linewidth}
        \includegraphics[width=\linewidth]{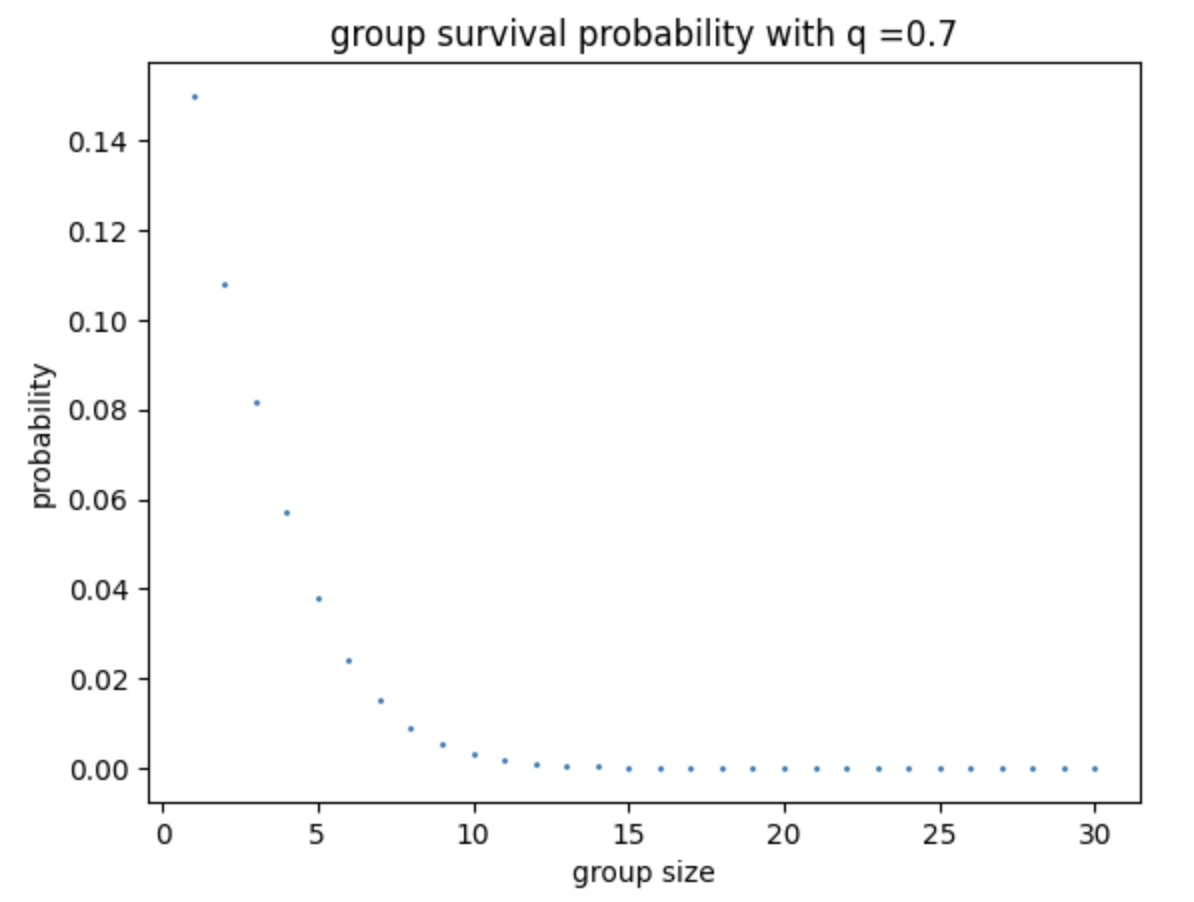}
        \caption{Homogeneous contagion factor $q = 0.7$}
    \end{subfigure}
    \caption{Probability of the group surviving as a function of group size $n$}
    \label{figure:surviving-rate-against-group-size-with-diff-q}
\end{figure}

The plots show that when $q$ is small, survival probability initially increases with $n$ before eventually approaching $0$; whereas for larger $q$ the decline is rapid, and the optimal number of regular members may be just $1$.
In other words, with low contagion, adding members can initially improve survival by diversifying the risk; but with high contagion, this advantage disappears quickly.

Now, we formalize these observations and characterize the range of the optimal group size.

\begin{theorem} \label{theorem:homogeneous-group-Sn-converges}
    Under the assumption of $q_{ij}(t) = q$ and that $\{\eta_i\}_{i = 1, ..., n}$ have the same distribution, i.e. the group is homogeneous, the quantity $S_n$ defined in \cref{equation:homogeneous-group-immediate-contagion-eq-5} satisfies $1 - S_n \le 3\cdot a^n$ for some $a\in(0, 1)$, when $n$ is sufficiently large.
\end{theorem}
\begin{proof}
    First notice we can change the starting value of $k$ to $0$ in $S_n$ because when $k$ takes $0$, the corresponding summand term is $0$. 
    Thus, we can write $S_n = \sum_{k=0}^n {n\choose k}\cdot c_1^k\cdot c_2^{n-k}\cdot (1-c_3^k)^{n-k}$.
    Let $\{Y_i\}_{i\in\mathbb{N}}$ be i.i.d. $\text{Bernoulli}(c_1)$ variables, and define
    $X_n = \sum_{i=1}^n Y_i$.
    So, $X_n \sim \text{Bin}(n, c_1)$, and we can express $S_n$ as
    \begin{equation} \label{equation:homogeneous-group-Sn-converges-eq-1}
        S_n = \mathbb{E}\left[\left(1 - c_3^{X_n} \right)^{n-X_n}\right]
    \end{equation}
    For any $\epsilon > 0$, using Hoeffding's concentration inequality, we have 
    \begin{equation} \label{equation:homogeneous-group-Sn-converges-eq-2}
        \mathbb{P}\left(\left|\frac{X_n}{n} - c_1 \right| \ge \epsilon \right) = \mathbb{P}(|X_n - nc_1| \ge n\epsilon) \le 2\exp\left(-\frac{2n^2\epsilon^2}{n} \right) = 2\exp(-2n\epsilon^2)
    \end{equation}
    Pick $\epsilon = c_1/2$.
    Define event $A_n = \{|X_n/n - c_1| \ge c_1/2\}$ and write $Z_n := \left(1 - c_3^{X_n} \right)^{n - X_n}$.
    By conditional expectation,
    \begin{equation} \label{equation:homogeneous-group-Sn-converges-eq-3}
        1 - S_n = \underbrace{\mathbb{E}\left[1 - Z_n \,|\, A_n\right] \cdot \mathbb{P}(A_n)}_{I_1} + \underbrace{\mathbb{E}\left[1 - Z_n \,|\, A_n^c\right] \cdot \mathbb{P}(A_n^c)}_{I_2}
    \end{equation}
    Because $Z_n \in (0, 1]$, $I_1 \le \mathbb{P}(A_n) \le 2\exp(-nc_1^2/2)$.
    Next, in the event $A_n^c$, we have $X_n > nc_1/2$.
    Since $c_3 \in (0, 1)$, monotonicity implies that 
    $$Z_n = \left(1 - c_3^{X_n} \right)^{n - X_n} \ge \left(1 - c_3^{nc_1/2} \right)^{n(1-c_1/2)}$$
    on $A_n^c$.
    Now, we bound $I_2$. 
    \begin{equation} \label{equation:homogeneous-group-Sn-converges-eq-4}
        \begin{aligned}
            I_2 \le \mathbb{E}[1 - Z_n \,|\, A_n^c] &\le \mathbb{E}\left[1 - \left(1 - c_3^{nc_1/2} \right)^{n(1 - c_1/2)} \,\bigg|\, A_n^c\right] \\
            &= 1 - \left(1 - c_3^{nc_1/2} \right)^{n(1-c_1/2)}
        \end{aligned}
    \end{equation}
    Use the binomial inequality, i.e. $(1-x)^n \ge 1 - nx$ for $x\in (0, 1)$, to get
    \begin{equation} \label{equation:homogeneous-group-Sn-converges-eq-5}
        1 - \left(1 - c_3^{nc_1/2} \right)^{n(1-c_1/2)} \le 1 - \left(1 - nc_3^{nc_1/2} \right)^{1-c_1/2}
    \end{equation}
    Next, we claim there exists some $N$ such that when $n > N$, we have $nc_3^{nc_1/2} < nc_3^{nc_1/6} < 1$.
    Indeed, $nc_3^{nc_1/6} < 1$ is equivalent to $\frac{\log n}{n} < \frac{c_1}{6}\log \frac{1}{c_3}$, and we choose $N$ based on $\frac{\log n}{n} \rightarrow 0$ as $n \rightarrow \infty$.
    Now, for $n > N$, we further upper bound \cref{equation:homogeneous-group-Sn-converges-eq-5} via 
    \begin{equation} \label{equation:homogeneous-group-Sn-converges-eq-6}
        \text{RHS of \cref{equation:homogeneous-group-Sn-converges-eq-5}} \le 1 - \left(1 - nc_3^{nc_1/2} \right) = nc_3^{nc_1/2}
    \end{equation}
    because $nc_3^{nc_1/2}, c_1/2 \in (0, 1)$.
    Note that with $n > N$, we further have
    \begin{equation} \label{equation:homogeneous-group-Sn-converges-eq-7}
        nc_3^{nc_1/2} < c_3^{nc_1/3}
    \end{equation}
    Indeed, the previous inequality is equivalent to $nc_3^{nc_1/6} < 1$, which is the criterion for our choice of $N$.
    Finally, combining the above bounds yields
    \begin{align} \label{equation:homogeneous-group-Sn-converges-eq-8}
        I_1 &\le 2\exp(-nc_1^2/2) = 2[\exp(-c_1^2/2)]^n & \forall n \nonumber\\
        I_2 &\le \left(c_3^{c_1/3} \right)^n & n > N
    \end{align}
    Setting $a = \max\left(\exp(-c_1^2/2), c_3^{c_1/3} \right) \in (0, 1)$, \cref{equation:homogeneous-group-Sn-converges-eq-3} and \cref{equation:homogeneous-group-Sn-converges-eq-8} complete the proof.
\end{proof}

The implication of this theorem is two-fold. First, it shows that $S_n \rightarrow 1$ as $n\rightarrow \infty$ when a contagion effect exists. In other words, the group eventually defaults as we include more and more people (see equations (\ref{equation:homogeneous-group-immediate-contagion-eq-4}) and (\ref{equation:homogeneous-group-immediate-contagion-eq-5})).
Second, $1 - S_n$ decays exponentially fast as $n$ grows, meaning that the group's chance of surviving becomes very slim before $n$ is very large. As shown in \cref{figure:surviving-rate-against-group-size-with-diff-q}, the optimal group size is usually not large. One observation from the plots is that when $q$ is small, i.e. a weak contagion, the optimal group size tends to be larger. This aligns with the fact that when there is no contagion, the optimal group size is infinity (see \cref{equation:homogeneous-group-immediate-contagion-eq-2}).

The theorem therefore indicates that if contagion is present, there exists a \textit{finite} optimal group size that maximizes the group's survival probability. The following proposition shows that if contagion is small, this optimal group size involves strictly more than one \textit{regular} member.

\begin{proposition}\label{prop:more-than-one-person}
    If the contagion factor $q < 1/2$, then the optimal group size is strictly larger than $1$.
\end{proposition}
\begin{proof}
    We explicitly compute $S_1$ and $S_2$ from \cref{equation:homogeneous-group-immediate-contagion-eq-5}.
    Recalling that $c_2 = 1 - c_1$, we have
    $$S_1 = c_1, \quad S_2 = c_1(2 - 2c_3 + 2c_1c_3 - c_1),$$
    which gives $$S_2 - S_1 = c_1(c_1 - 1)(2c_3 - 1).$$
    
    So, $c_3 = 1/2$ is the boundary case.
    Since $c_3 = 1 - q$, the condition $q < \frac{1}{2}$ is equivalent to $c_3 > \frac{1}{2}$, which yields $S_2 < S_1$.
    This means that a group with two regular members has a strictly higher survival probability by time $T$ than a "group" with only one regular member.
\end{proof}

Finally, we conclude our discussion by suggesting an upper bound for the optimal group size.
The main theorem above tells us only that the optimal group size is never infinite as long as contagion exists.
Before proceeding, we mention that if one substitute $X_n$ by its large number approximation $nc_1$ i.e. its expectation, we obtain an explicit optimizer for the group size $$n^* = \frac{\log(1/2)}{c_1\log c_3}$$ (as the group size is an integer, the true optimizer will be one of $n^*$'s neighboring integers.) Because this point deviates from the main discussion of the paper, we only briefly sketch the justification below.

Denote by $T_n$ the estimated value of $S_n$, i.e. $$T_n := \left(1 - c_3^{nc_1}\right)^{n - nc_1}.$$
To find the minimizer of $T_n$, it is equivalent to minimize $$f(n) := (n - nc_1)\log(1 - c_3^{nc_1}).$$
We temporarily treat $n \in \mathbb{R}$.
Compute $f'(n)$ and set it to zero.
Let $d = c_3^{n^*c_1} \in (0, 1)$, where $n^*$ is the optimizer of $f(n)$.
Then $d$ satisfies $$(1 - d)\log(1-d) = d\log d.$$
A unique solution $d = 1/2$ exists (one can check this by studying the convexity of function $x\log x$).
So, we obtain the claimed $n^*$.

However, the actual value of $S_n$ may differ noticeably from its estimate $T_n$, especially when $n$ is small, which means that $n^*$ may deviate from the true optimizer of $S_n$. 
Although an explicit formula for the optimal group size may not be readily available, the following result establishes an upper bound for it.

\begin{proposition} \label{prop: optimal-size-bound}
    Optimal group size $n$ is less than $\max\left(N, \frac{\log \frac{1-c_1}{3}}{\log a} \right)$, where 
    $$N = \begin{cases}
        1 & \frac{c_1}{6}\log \frac{1}{c_3} > \frac{1}{e} \\
        -\frac{6W_{-1}\left(-\frac{c_1|\log c_3|}{6} \right)}{c_1|\log c_3|} & \text{otherwise}.
    \end{cases}$$
    and $W_{-1}$ is the secondary branch of the Lambert W function, and $a$ is defined in \Cref{theorem:homogeneous-group-Sn-converges}.
\end{proposition}
\begin{proof}
    We first make precise the choice of $N$ used in the proof of \Cref{theorem:homogeneous-group-Sn-converges}.
    Recall that $N$ must satisfy 
    \begin{equation} \label{equation:N-criterion}
        \frac{\log n}{n} < \frac{c_1}{6}\log \frac{1}{c_3}
    \end{equation}
    for any $n > N$.
    For notational simplicity, write $\alpha = \frac{c_1}{6}\log \frac{1}{c_3}$.
    Comparing $f(x) = \log x$ and $g(x) = \alpha x$, $N$ can be taken as the larger root of $\log x = \alpha x$, if such root exists.
    A standard calculus computation shows that if $\alpha > 1/e$, the two functions $f(x), g(x)$ never intersect. In this case, $N=1$ suffices.

    If $\alpha < 1/e$, then $\log x = \alpha x$ has two solutions.
    This is because the function $h(x) = \frac{\log x}{x}$ attains its maximum value $1/e$ at $x = e$.
    Hence, the horizontal line at height $\alpha$ intersects the graph of $h(x)$ twice, producing two roots for the equation $\log x = \alpha x$.
    For our purpose, we take the larger one since we need \cref{equation:N-criterion} to hold for all large $n$.
    To express this root, rewrite the equation as $-\alpha x = -\alpha e^{\alpha x}$.
    Setting $z = -\alpha x$ gives $ze^z = -\alpha$.
    The larger $x$ corresponds to the smaller $z$ since they have opposite sign.
    Thus we select $z = W_{-1}(-\alpha)$, which then gives $$x = -\frac{W_{-1}(-\alpha)}{\alpha}$$
    where $W_{-1}$ denotes the secondary branch of the Lambert $W$ function
    (\textcite{corless1996lambert} provides more details on why $-W_{-1}$ gives the larger root.)

    Next, recall from \Cref{theorem:homogeneous-group-Sn-converges} that $1 - S_n \le 3a^n$ for some $a \in (0, 1)$ and $\forall n > N$.
    Define $x'$ by requiring $1 - S_1 = 3a^{x'}$.
    Since $a \in (0, 1)$, it follows that for any $n > x'$, $$3a^n < 3a^{x'} = 1 - S_1.$$
    Hence for such $n$, 
    $$1 - S_n \le 3a^n < 1 - S_1,$$
    so $S_n > S_1$.
    As the group survival probability equals $\mathbb{P}(\eta_0 > T)\cdot (1 - S_n)$ (see equations (\ref{equation:homogeneous-group-immediate-contagion-eq-4}) and (\ref{equation:homogeneous-group-immediate-contagion-eq-5})), it follows that beyond this threshold $\max(x', N)$, the survival probability of an $n$-regular-member group is strictly smaller than that of the 1-regular member group.
    The conclusion now follows.
\end{proof}

The value of $N$ above, although always finite, can be quite large. In the next proposition, we introduce the idea of a suboptimal group size attempting to reduce $N$.

We say $n$ is $\delta$-suboptimal if the group's survival probability with $n$ regular members is greater than its true optimal survival probability minus $\delta$. 
In other words, the value of $\mathbb{P}(\eta_0 > T)\cdot (1 - S_n)$ at the sub-optimal value $n$ is within $\delta$ of its true maximum.
We note that in many cases of $(c_1, c_3)$ considered in our numerical experiments, the upper bound obtained below is indeed smaller than the one presented in the previous proposition.

\begin{proposition} \label{prop:suboptimal-minimizer}
    Set $b = \max\left(e^{-c_1^2/2}, c_3^{c_1/2} \right) \in (0, 1)$. Then for any $\delta > 0$, there exists a $\delta$-suboptimal group size $n$ satisfying $n \le \lceil\log_b\frac{\delta(1-b)}{2} \rceil$.
\end{proposition}

\begin{proof}
    We bound $S_{n+1}$ in term of $S_n$, where we continue to use the notation from before with $X_n = \sum_{i = 1}^n Y_i$ denoting the number of natural defaults among $n$ regular members.
    First, 
    \begin{align} \label{equation:suboptimal-minimizer-eq-1}
        S_{n+1} &= \mathbb{E}\left[\left(1 - c_3^{X_{n+1}} \right)^{n+1-X_{n+1}}\right] \nonumber \\
        &= \mathbb{E}\left[\left(1 - c_3^{X_n + Y_{n+1}} \right)^{n+1-X_n - Y_{n+1}}\right] \nonumber \\
        &= (1-c_1)\mathbb{E}\left[\left(1 - c_3^{X_n} \right)^{n+1 - X_n}\right] + c_1\mathbb{E}\left[\left(1 - c_3^{X_n+1} \right)^{n - X_n}\right] \nonumber \\
        &\ge (1-c_1)\mathbb{E}\left[\left(1 - c_3^{X_n} \right)^{n + 1 - X_n}\right] + c_1S_n
    \end{align}
    From Hoeffding's inequality,
    $\mathbb{P}(X_n - nc_1 \le -n\epsilon) \le \exp(-2n\epsilon^2)$. Substituting $\epsilon = c_1/2$, we get 
    $\mathbb{P}(X_n \le nc_1/2) \le e^{-nc_1^2/2}$. This implies 
    \begin{align} \label{equation:suboptimal-minimizer-eq-2}
        &\quad\,\,S_n - \mathbb{E}\left[\left(1 - c_3^{X_n} \right)^{n + 1 - X_n}\right] \nonumber \\
        &= \mathbb{E}\left[\left(1 - c_3^{X_n} \right)^{n - X_n}\cdot c_3^{X_n}\right] \nonumber \\
        &\le \mathbb{E}\left[c_3^{X_n}\right] \nonumber\\
        &= \mathbb{E}[c_3^{X_n}\mathbbm{1}(X_n \le nc_1/2) + c_3^{X_n}\mathbbm{1}(X_n > nc_1/2)] \nonumber\\
        &\le 1\cdot \mathbb{P}(X_n \le nc_1/2) + c_3^{nc_1/2}\cdot \mathbb{P}(X_n > nc_1/2), \text{ by }c_3 \in (0, 1) \nonumber\\
        &\le e^{-nc_1^2/2} + c_3^{nc_1/2}
    \end{align}
    Combining equations (\ref{equation:suboptimal-minimizer-eq-1}) and (\ref{equation:suboptimal-minimizer-eq-2}) gives 
    \begin{align} \label{equation:suboptimal-minimizer-eq-3}
        S_{n+1} &\ge (1-c_1)\cdot \left(S_n - e^{-nc_1^2/2} - c_3^{nc_1/2} \right) + c_1S_n \nonumber \\
        &= S_n - (1-c_1)\left(e^{-nc_1^2/2} + c_3^{nc_1/2} \right) \nonumber \\
        &\ge S_n - 2b^n
    \end{align}
    where $b := \max\left(e^{-c_1^2/2}, c_3^{c_1/2} \right) \in (0, 1)$.
    From \cref{equation:suboptimal-minimizer-eq-3}, iterating gives 
    $$ S_m \ge S_n - 2(b^n + \cdots + b^{m-1}) \ge S_n - 2\frac{b^n}{1-b}$$
    for all $m > n$.
    Now let $p_n$ denote the group survival probability with $n$ regular members. 
    From equations (\ref{equation:homogeneous-group-immediate-contagion-eq-4}) and (\ref{equation:homogeneous-group-immediate-contagion-eq-5}), 
    $$p_m - p_n = \mathbb{P}(\eta_0 > T)\cdot (S_n - S_m) \le 2\frac{b^n}{1 - b}$$
    Hence, when $m > \lceil\log_b \frac{\delta (1-b)}{2} \rceil =: U$, we have $$p_m \le p_U + \delta.$$

    In other words, any group size larger than $U$ cannot improve the survival probability by more than $\delta$ compared to the group size $U$.
    Thus $U$ is guaranteed to be $\delta$-suboptimal, which is sufficient to establish our claim of this proposition.
\end{proof}

The propositions above show that to find a group size that is nearly optimal (or true optimal), it is not necessary to consider all $n \in \mathbb{N}$; instead, the search can be restricted to a finite range determined by the bounds above.

In \cref{figure:upper-bound} below, we demonstrate that for some values of $(c_1, c_3)$, the $\delta$-suboptimal upper bound is indeed smaller. We emphasize, however, that it is not guaranteed that the $\delta$-suboptimal bound is always the better one; in practice, one can simply take the minimum of the two bounds from Propositions \ref{prop: optimal-size-bound} and \ref{prop:suboptimal-minimizer}. 
The pseudo-code below shows how one can numerically determine a group size that yields a survival probability at most $\delta$-away from the true optimal value, i.e. $\text{surv. prob.}(n \text{ found below}) \ge \text{surv. prob.}(\text{true optimal }n) - \delta$.

\begin{verbatim}
    best_survival_prob <- 0
    U <- min(proposition 2, proposition 3)
    for n <- range(1, U):
        step 1:
            compute Sn via eq(13)
            compute survival probability via eq(12)
        step 2:
            if probability is higher than best_survival_prob:
                update best_survival_prob and record this n 
\end{verbatim}
The time cost for computing each $S_n$ is $\mathscr{O}(n)$ since we can use the recursive relation relation ${n \choose k+1} = {n\choose k}\cdot \frac{n-k}{k+1}$, so there is no need to compute binomial coefficients from scratch for different $k$'s. In total, the time complexity of the procedure is $\mathscr{O}(U^2)$, where $U$ is defined in the pseudo-code above.

\begin{figure}[h!]
    \centering
    \begin{subfigure}[b]{0.45\linewidth}
        \includegraphics[width=\linewidth]{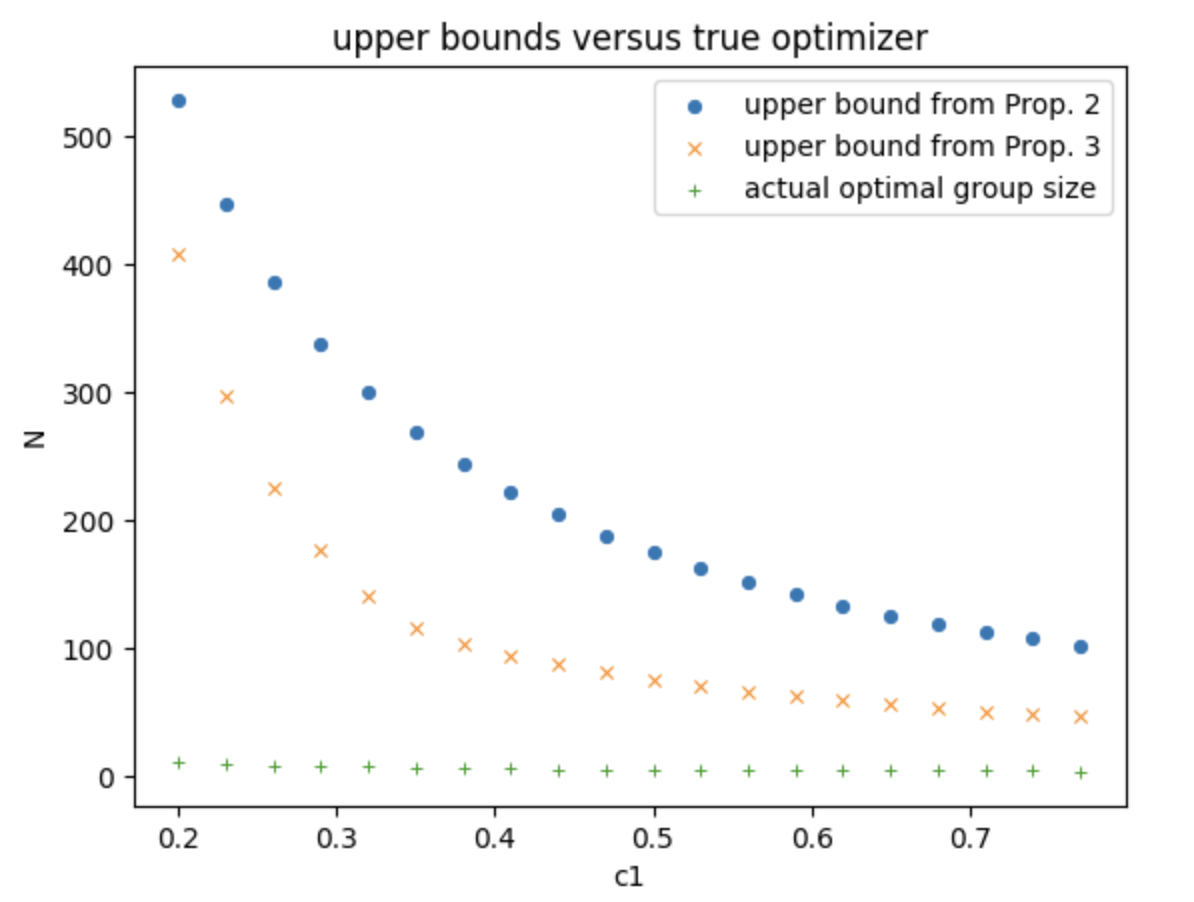}
        \caption{Contagion factor and suboptimal factor set to $q=0.3, \delta= 0.03$}
    \end{subfigure}
    \begin{subfigure}[b]{0.45\linewidth}
        \includegraphics[width=\linewidth]{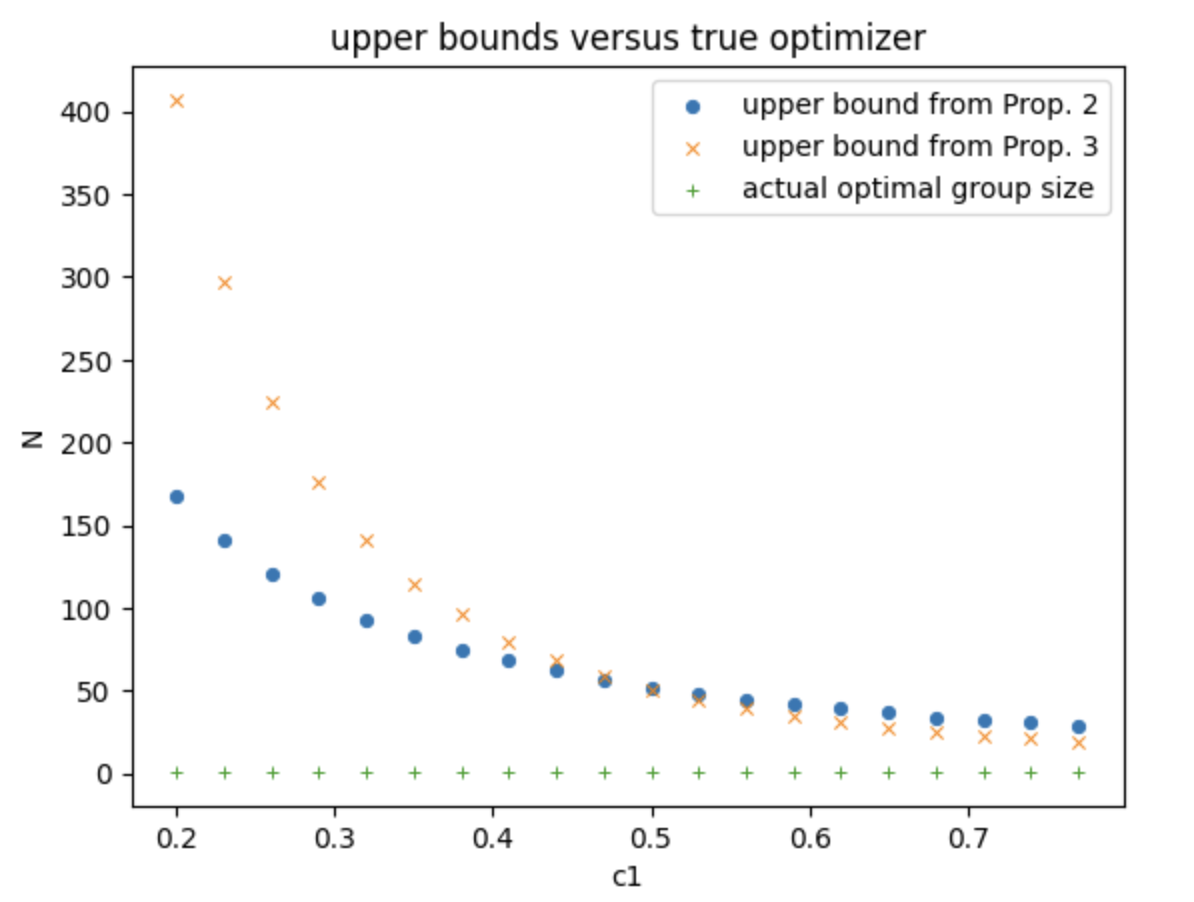}
        \caption{Contagion factor and suboptimal factor set to $q=0.6, \delta = 0.03$}
    \end{subfigure}
    \caption{Comparison between different upper bounds for the optimal group size}
    \label{figure:upper-bound}
\end{figure}

Observe in \cref{figure:upper-bound} that the true optimizer is usually much smaller than the upper bounds we provide (in the left plot in particular, the true optimizer varies from 10 to 4.)
We briefly explain the difficulty of obtaining a tight analytical upper bound for the optimal group size.
As mentioned above, if we substitute the random variable $X_n$ in \cref{equation:homogeneous-group-Sn-converges-eq-1} by its large number estimation $nc_1$, we can explicitly solve for its optimizer.
However, such an estimation is only accurate for large $n$'s, meaning we do not have control on the behavior of the true $S_n$ over small $n$'s.
Figure 2 above may lead us to think the group's survival probability curve follows the same general shape as we vary the group size; however, in some cases of $(c_1, c_3)$, we might obtain different patterns. We illustrate this with \cref{figure:Sn-pattern(a)}, where $c_1 = 0.01$ and $q = 0.77$.

\begin{figure}[h!]
    \centering
    \includegraphics[width=0.5\linewidth]{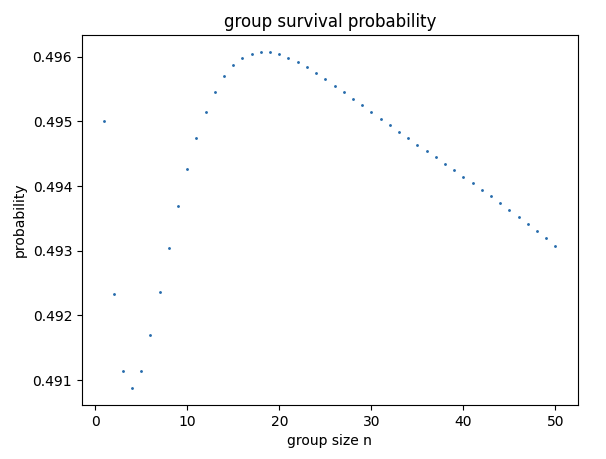}
    \caption{Another example of the group's survival probability with respect to $n$}
    \label{figure:Sn-pattern(a)}
\end{figure}

In addition, as we can see in \cref{figure:Sn-pattern(b)}, where $c_1 = 0.05$ and $q = 0.85$, the group's true survival probability differs noticeably from the estimated value (with $X_n$ substituted by $nc_1$) even until $n$ goes up to 100 (the true optimizer is at $n=1$, while the estimated curve peaks at $n = 7$.)

\begin{figure}[h!]
    \centering
    \includegraphics[width=0.5\linewidth]{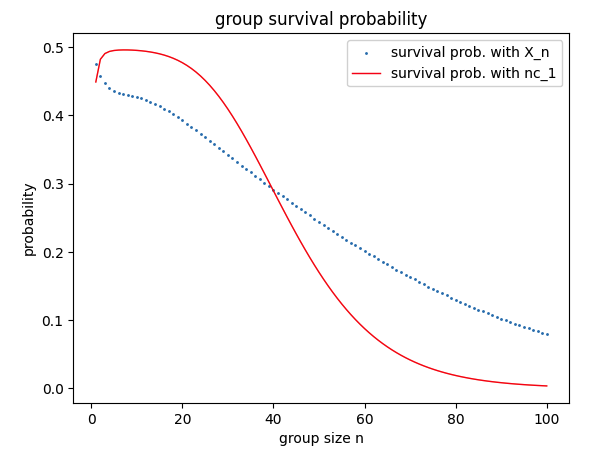}
    \caption{Compare the true group survival probability with its approximation, in which $X_n$ is replaced by $nc_1$}
    \label{figure:Sn-pattern(b)}
\end{figure}

As a summary, the plots here demonstrate the challenge of analyzing $S_n$ with small $n$'s.
We leave the task of finding a tighter upper bound of the optimizer as future work for anyone who is interested in improving it.


\section{Declarations}
\subsection{Funding}
Author Alejandra Quintos is supported by the Office of the Vice Chancellor for Research and Graduate Education at the University of Wisconsin–Madison, with funding from the Wisconsin Alumni Research Foundation.

\subsection{Competing interests}
The authors have no competing interests to declare that are relevant to the content of this article.

\printbibliography

\end{document}